\newif\ifdraft
\newtheorem{definition}{Definition}
\newtheorem{theorem}{Theorem}
\def\bbA{\mathbb{A}}
\def\bbG{\mathbb{G}}
\def\bbZ{\mathbb{Z}}
\def\calA{\mathcal{A}}
\def\calF{\mathcal{F}}
\def\calL{\mathcal{L}}
\def\calT{\mathcal{T}}
\def\inr{\in_r}
\def\scrA{\mathscr{A}}
\def\enc{\mathsf{Enc}}
\def\dec{\mathsf{Dec}}
\def\Setup{\mathsf{Setup}}
\def\KeyGen{\mathsf{KeyGen}}
\def\Encrypt{\mathsf{Encrypt}}
\def\Decrypt{\mathsf{Decrypt}}
\def\DecryptNode{\mathsf{DecryptNode}}
\definecolor{revision}{rgb}{1,0,0}
\begin{document}
\title{P-MOD: Secure Privilege-Based Multilevel\\ 	
Organizational Data-Sharing in Cloud Computing}
\author{Ehab Zaghloul, Kai Zhou and Jian Ren%
\thanks{The authors are with the Department of Electrical and Computer Engineering, Michigan State University, East Lansing, MI 48824-1226, Email: \{ebz, zhoukai, renjian\}@msu.edu}}

\maketitle

\begin{abstract}
Cloud computing has changed the way enterprises store, access and share data.
Data is constantly being uploaded to the cloud and shared within an organization built on a hierarchy of many different individuals that are given certain data access privileges. With more data storage needs turning over to the cloud, 
finding a secure and efficient data access structure has become a major research issue.
With different access privileges, individuals with more privileges (at higher levels of the hierarchy) are granted access to more sensitive data than those with fewer privileges (at lower levels of the hierarchy). 
In this paper, a Privilege-based Multilevel Organizational Data-sharing scheme~(P-MOD) is proposed that incorporates a privilege-based access structure into an attribute-based encryption mechanism to handle these concerns. 
Each level of the privilege-based access structure is affiliated with an access policy that is uniquely defined by specific attributes. 
Data is then encrypted under each access policy at every level to grant access to specific data users based on their data access privileges. 
An individual ranked at a certain level can decrypt the ciphertext (at that specific level) if and only if that individual owns a correct set of attributes that can satisfy the access policy of that level. 
The user may also decrypt the ciphertexts at the lower levels with respect to the user's level. 
Security analysis shows that P-MOD is secure against adaptively chosen plaintext attack assuming the DBDH assumption holds.
The comprehensive performance analysis demonstrates that P-MOD is more efficient in computational complexity and storage space than the existing schemes in secure data sharing within an organization.
\end{abstract}

\begin{IEEEkeywords}
Cloud-based data storage, hierarchy, privilege-based access, sensitive data, attribute-based encryption. 
\end{IEEEkeywords}

\section{Introduction}

It was estimated that data breaches cost the United States' healthcare industry approximately \$6.2 billion in 2016 alone~\cite{ponemon2016}. 
To mitigate financial loss and implications on the reputation associated with data breaches, large multilevel organizations, such as healthcare networks, government agencies, banking institutions,  commercial enterprises and etc., began allocating resources into data security research to develop and improve accessibility and storage of highly sensitive data.

One major way that large enterprises are adapting to increased sensitive data management is the utilization of the cloud environment.
It was reported that more than half of all U.S. businesses have turned over to the cloud for their business data management needs \cite{forbes}.
The on-demand cloud access and data sharing can greatly reduce data management cost, storage flexibility, and capacity \cite{mell2011nist}.
However, data owners have deep concerns when sharing data on the cloud due to security issues. Once uploaded and shared, the data owner inevitably loses control over the data, opening the door to unauthorized data access.

A critical issue for data owners is how to efficiently and securely grant privilege level-based access rights to a set of data. 
Data owners are becoming more interested in selectively sharing information with data users based on different levels of granted privileges. 
The desire to grant level-based access results in higher computational complexity and complicates the methods in which data is shared on the cloud. 
Research in this field focuses on finding enhanced schemes that can securely, efficiently and intelligently share data on the cloud among users according to granted access levels.

Based on a study conducted by the National Institute of Standards and Technology~(NIST), Role-Based Access Control~(RBAC) models are the most widely used to share data in hierarchical enterprises of 500 or more individuals \cite{o20102010}. 
RBAC models aim to restrict system access to authorized users as they provide access control mechanisms. 
The access control mechanisms are based on predefined and fixed roles making the models identity-centric. 
Each individual within the organization is assigned to a role that defines the user's privileges. 
However, the limitations of this model are evident when presented with a large complex matrix of data users in an organization. The foundation of RBAC is based on abstract choices for roles. 
This would require a continuously increasing number of RBAC roles to properly encapsulate the privileges assigned to each user of the system. 
Managing a substantial number of rules can become a resource-intensive task, referred to as \textit{role explosion} \cite{elliott2010role}.

To better comprehend the importance of this study, consider the scenario where patients share their Public Health Records (PHR) on the cloud to be accessed by health providers and administrators of a hospital.
In most cases, the patient wishes to grant the physician access to most parts of the PHR (including its most sensitive parts, e.g. medical history) while granting an administrator access to limited parts that are less sensitive (e.g. date of birth). 
In order to achieve that, the patient needs to define a hierarchy of data access privileges ranking various types of hospital employees. 
Next, the patient needs to clarify the privileges at each level to define what each data user can access. 
It is important to realize that each patient may wish to encrypt his/her PHR differently. 
For example, the patient might grant access to the most sensitive parts of his/her PHR to only specific physicians while denying others. 
This allows the patient full control in defining the hierarchy, which is not fixed or predefined by the hospital.

In this paper, a Privilege-based Multilevel Organizational Data-sharing scheme~(P-MOD) is proposed to solve the problems of sharing data within organizations with complex hierarchies. 
First, the scheme proposes to partition a data file into multiple parts of different sensitivity. 
Next, each part is symmetrically encrypted. The keys used in encryption serve as the actual data shared to users. The scheme then proposes an access structure that ranks data users of an organization into different privileged levels. 
Each level is associated with an access tree that defines the privileges pertaining to data users at each specific level.
Each part of the data file is then encrypted once in a hierarchical manner to grant different access rights to the users based on their level within the hierarchy. 
The process of encryption and decryption is based on an Attribute Based Encryption (ABE) architecture that can achieve fine-granularity when assigning privileges.

The rest of this paper is organized as follows. 
In Section \ref{related}, the related work is reviewed. 
In Section \ref{sec:prelim}, preliminaries are introduced that summarize key concepts used in this research. 
Next, in Section \ref{sec:problem}, the problem formulation is described outlining the system model and design goals.
In Section \ref{sec:scheme}, the proposed scheme, P-MOD is introduced. 
Following that, in Section \ref{sec:security} we formally prove the security of P-MOD based on the hardness of the Decisional Bilinear Diffie-Hellman (DBDH) problem.
In Section \ref{sec:performance}, performance analysis of P-MOD is conducted and compared with two other schemes \cite{wang2016efficient, bethencourt2007ciphertext}. 
Finally, in Section \ref{sec:conclusion}, a conclusion is drawn to summarize the work done in this research.

\section{Related Work \label{related}}

Fuzzy Identity-Base Encryption (Fuzzy IBE) was introduced in \cite{sahai2005fuzzy} to handle data sharing on the cloud in a flexible approach using encryption. 
The ciphertext is shared on the cloud to restrict access to authorized users. 
In order for an authorized individual to obtain the data, the user must request a private key from a key-issuer to decrypt the encrypted data. 
Fuzzy IBE is a specific type of function encryption \cite{boneh2011functional} in which both the data user's private key and ciphertext are affiliated with attributes.
Attributes are descriptive pieces of information that can be assigned to any user or object. 
Since attributes can be any variable, they provide more flexibility when granting data access. 
The scheme enables a set of descriptive attributes to be associated with a data user's private key and the ciphertext shared on the cloud. 
If the data user's private key incorporates the minimum threshold requirement of attributes that match those integrated within the ciphertext, the data user can decrypt it.
Although this scheme allows complex systems to be easily defined using attributes, it becomes less efficient when used to express large systems or when the number of attributes increases. 

Attribute-Based Encryption (ABE) schemes later emerged to provide more versatility when sharing data. 
These schemes integrate two types of constructs: attributes and access policies. Access policies are statements that join attributes to express which users of the system are granted access and which users are denied. 
ABE schemes were introduced via two different approaches: Key-Policy Attribute-Based Encryption (KP-ABE) and Ciphertext Policy Attribute-Based Encryption (CP-ABE). 
KP-ABE was initially introduced in~\cite{goyal2006attribute}. 
In KP-ABE, each ciphertext is labeled with a set of descriptive attributes, while each private key is integrated with an access policy. 
For authorized data users to decrypt the ciphertext, they must first obtain a private key from the key-issuer to use in decryption. 
The key-issuer integrates the access policy into the keys generated. 
Data users can successfully decrypt a ciphertext if the set of descriptive attributes associated with the ciphertext satisfies the access policy integrated within their private keys. KP-ABE can achieve fine-grained access control and is more flexible than Fuzzy IBE. 
However, the data owner must trust the key-issuer to only issue private keys to data users granted the privilege of access. 
This is a limitation since the data owner ultimately forfeits control over which data users are granted access.

CP-ABE is another approach that was later proposed in \cite{bethencourt2007ciphertext}. 
It is considered to be conceptually similar to Role-Based Access Control (RBAC) \cite{ferraiolo2009role}. 
However, CP-ABE gives the data owner control over which data user is able to decrypt certain ciphertexts.
This is due to the access structure being integrated by the data owner into the ciphertext during encryption. 
It allows the private key generated by the key-issuer to only contain the set of attributes possessed by the data user. 
Several enhanced CP-ABE schemes \cite{cheung2007provably, lewko2010fully, denisow2015dynamic, guo2014cp} were later introduced that can provide higher flexibility and better efficiency.

Most attribute-based encryption schemes such as Fuzzy IBE, KP-ABE, and CP-ABE serve as a better solution when data users are not ranked into a hierarchy and each is independent of one another (i.e. no relationships). 
However, they share a common limitation of high computational complexity in the case of large multilevel organizations.
These schemes require a single data file to be encrypted with a large number of attributes (from different levels) to grant them access to it.

Hierarchical Attribute-Based Encryption (HABE) that combines the Hierarchical Identity-Based Encryption (HIBE) \cite{gentry2002hierarchical} scheme and CP-ABE  was later introduced in \cite{wang2010hierarchical}. 
HABE is able to achieve fine-grained access control in a hierarchical organization. 
It consists of a root master that generates and distributes parameters and keys, multiple domain masters that delegate keys to domain masters at the following levels, and numerous users. 
In this scheme, keys are generated in the same hierarchical key generation approach as the HIBE scheme.
To express an access policy, HABE uses a disjunctive normal form where all attributes are administered from the same domain authority into one conjunctive clause.
This scheme becomes unsuitable for practical implementation when replicas of the same attributes are administered by other domain authorities.
Synchronizing attribute administration might become a challenging issue with complex organizations that have multiple domain authorities.
Examples of other hierarchical schemes were introduced in \cite{bobba2009attribute, wan2012hasbe}.

File Hierarchy Ciphertext Policy Attribute-Based Encryption (FH-CP-ABE) scheme was introduced in \cite{wang2016efficient}. 
FH-CP-ABE proposes a leveled access structure to manage a hierarchical organization that shares data of various sensitivity. 
A single access structure was proposed that represents both the hierarchy and the access policies of an organization. 
This access structure consists of a root node, transport nodes, and leaf nodes. The root node and transport nodes are in the form of gates (i.e. AND or OR). 
The leaf nodes represent attributes that are possessed by data users. 
Based on the possession of certain attributes, each data user is mapped into specific transport nodes (certain levels within the hierarchy) based on the access structure that the user satisfies. 
If the data user satisfies a full branch of the access structure, then the data user is ranked at the root node (highest level within the hierarchy). 
Data users ranked at the highest level (root node) can decrypt a ciphertext of highest sensitivity and any other ciphertext with less sensitivity in the lower levels of the hierarchy. 
The nodes ranked in the lower levels (transport nodes) can not decrypt any ciphertexts in the levels above. 
The main advantage of this scheme is that it provides leveled access structures which are integrated into a single access structure. 
As a result, storage space is saved as only one copy of the ciphertext is needed to be shared on the cloud for all data users. 
However, since this scheme uses a single access structure to represent the full hierarchy, the higher levels are forced to accommodate attributes of all the levels below. 
As the number of levels increases in the hierarchy, the number of attributes grows exponentially making this scheme infeasible on a large scale. 
The authors in this work also propose a simplified and reduced access structure to reduce the computational complexity. 
They achieve this by removing all branches of the single access structure while keeping one full branch. 
The full branch consists of the root node, a set of transport nodes (one for each level), and the leaf nodes (attributes). 
They claim that not all nodes in the hierarchy carry information and therefore could be removed. 
However, this claim is only applicable to the case where the highest transport node of each branch to be removed is an OR gate. 
This is the least complicated scenario. 
In the case where the highest transport node of each branch consists of an AND gate, this solution is not applicable. 
Removing branches that consist of AND gates would change the access policies defined. 
In real applications, relationships within an organization are often built in a cross-functional matrix, making this a complicated solution when assigning privileges.  

\section{Preliminaries} \label{sec:prelim}

This section introduces the preliminaries. 

\subsection{Cryptographic Hash Function}
A cryptographic hash function $h$ is a mathematical algorithm that maps data of arbitrary size to a bit string of fixed size. It is cryptographically secure if it satisfies the following requirements:

\begin{itemize}
\item \emph{Preimage-Resistance}: It should be computationally infeasible to find any input for any pre-specified output which hashes to that output, i.e. for any given $y$, it should be computationally infeasible to find an $x$ such that $h(x)=y$.
	
\item \emph{Week Collision Resistance}: For any given $x$, it should be computationally infeasible to find $x'\ne x$ such that $h(x')=h(x)$ \cite{naor1989universal}.
	
\item \emph{Strong Collision-Resistance}: It should be computationally infeasible to find any two distinct inputs $x$ and $x'$, such that $h(x) = h(x')$ \cite{bellare1997collision,damgaard1987collision}.
\end{itemize}

\subsection{Bilinear Maps}
Let $\bbG_0$ and $\bbG_1$ be two multiplicative cyclic groups of the same prime order $p$. The generator of $\bbG_0$ is denoted as $g$. A bilinear map from $\bbG_0 \times \bbG_0$ to $\bbG_1$ is a function $e\colon \bbG_0 \times \bbG_0 \rightarrow \bbG_1$ that satisfies the following properties:
\begin{itemize}
\item \emph{Bilinearity}: $e(g^a,g^b) = e(g,g)^{ab}$ for any $a,b\in\mathbb{Z}_p$.
\item \emph{Symmetry}: $e(g^a,g^b) = e(g,g)^{ab} = e(g^b,g^a)$ for any $a,b\in\mathbb{Z}_p$.
\item \emph{Non-degeneracy}: $e(g,g)\neq 1$.
\item \emph{Computability}: $e(g,g)$ is an efficiently computable algorithm.
\end{itemize}

\subsection{Decisional Bilinear Diffie-Hellman (DBDH) Assumption}

The DBDH assumption \cite{waters2011ciphertext} is a computational hardness assumption and is defined as follows:

Let $\bbG_0$ be a group of prime order $p$, $g$ be a generator, and $a,b,c\in\mathbb{Z}_p$ be chosen at random. 	

It is infeasible for the adversary to distinguish between any given $(g,g^a,g^b,g^c,e(g,g)^{abc})$ and $(g,g^a,g^b,g^c,R)$, where $R\inr\bbG_1$ is a random element and $\inr$ denotes a random selection.		An algorithm $\calA$ that  outputs a guess $z\in\{0,1 \}$, has advantage $\varepsilon$ in solving the DBDH problem in $\bbG_0$ if:
\begin{equation}
 \begin{split}
  &\left|Pr[\calA(g,g^a,g^b,g^c,T=e(g,g)^{abc})=0] \right. \\
  &\left. - Pr[\calA(g,g^a,g^b,g^c,T=R)=0]\right|\geq\varepsilon. 
 \end{split}
\end{equation}

The DBDH assumption holds if no polynomial algorithm has a non-negligible advantage in solving the DBDH problem.

\subsection{Access Structure}

An access structure \cite{beimel1996secure} represents access policies for a set of individuals interested in gaining individual access to a secret. The access structure defines sets of attributes that can be possessed by a single individual to allow access to the secret. It is defined as follows:

Let $\{P_1,\dots,P_n\}$ be a set of parties. A set of parties that can reconstruct the secret is defined as a collection. The collection is monotone meaning that, if $\scrA\subseteq 2^{\{P_1,\dots,P_n \}}$ then $\forall B\in\scrA$ and $B \subseteq C$ implies $C\in\scrA$. An access structure is a monotone collection $\scrA$ of non-empty subsets $\{P_1,\dots,P_n\}$, i.e., $\scrA \subseteq 2^{\{P_1,\dots,P_n \}} \setminus\emptyset$. The sets in $\scrA$ are called the authorized sets and the sets not in $\scrA$ are called the unauthorized sets.

In this paper, we use parties to represent the attributes. This means that an access structure $\scrA$ may consist of both authorized and unauthorized sets of attributes.

\subsection{Leveled Access Tree $\calT_i$}

An access tree $\calT_i$ at level $\calL_i$ represents an access structure that determines whether a data user can decrypt the ciphertext $CT_i$ at that level or not. A $\calT_i$ may consist of multiple nodes. We use $x_l^i \in \calT_i$ to represent the $l^{th}$ node of $\calT_i$. The non-leaf nodes of $\calT_i$ are in the form of threshold gates, while the leaf nodes represent possible attribute values possessed by data users.

For every node $x_l^i \in \calT_i$, a threshold value $k_{x_l^i}$ is assigned. A node in the form of an $\mathsf{AND}$ gate is associated with a threshold value $k_{x_l^i}=num_{x_l^i}$, where $num_{x_l^i}$ represents the number of children of node $x_l^i$. A node in the form of an $\mathrm{OR}$ gate or any leaf node representing an attribute is associated with a threshold value $k_{x_l^i}=1$. 

The root node $x_1^i$ of each $\calT_i$ carries a secret $sk_i$. The data user that possesses the correct set of attributes can satisfy $\calT_i$ and obtain $sk_i$.

\section{Problem Formulation} \label{sec:problem} 

The data owner (DO) owns a data file $\calF$ and wishes to share selective parts of it within an organization O. 
The $\calF$ is assumed to consist of either a single record or multiple records, where each record is composed of one or more record attributes.
The O consists of a large set of data users (DU), where DU = $\{DU_1,DU_2,\dots,DU_m\}$ and $m$ represents the number of data users. 
Data users in the set DU are ranked into a hierarchy H to define their privileges and which parts of $\calF$ they can gain access to. 
The H is not fixed nor predefined. It is defined by the DO as he/she encrypts his/her $\calF$. 
In other words, the DO ranks the set DU in the O based on his/her personal needs.
The DO defines the H based on an infinite pool of attributes. 
This means that the DO can rank the set DU according to his/her possession of specific attributes sets. 
It is important to realize that the DO has the complete freedom to define the H with any set of attributes without personally knowing any of the data users in set DU or what they possess in terms of attribute sets.
The H consists of multiple levels $\{\calL_1,\calL_2,\dots,\calL_k\}$, where $1 \leq k \leq \infty$. $\calL_1$ represents the highest rank while $\calL_k$ represents the lowest rank.
At each level $\calL_i$ of the H, the DO defines a leveled access tree $\calT_i$, where $1 \leq i \leq k$.
Each $DU_j$ possesses a set of attributes $\bbA_j = \{A_{j,1},A_{j,2}\dots,A_{j,n}\}$ that ranks them within a certain level of H.
The $DU_j$ uses his/her set $\bbA_j$ to request a unique private key $SK_j$ from the key-issuer to use for decryption.

The challenge is to provide a secure and efficient scheme for the DO to share different segments of $\calF$ among a set DU, where each user is ranked differently in H.
Each segment of $\calF$ has a sensitivity value that defines which users can gain access to it based on their rank within H.
A user at any given level can gain access to only the segment at his/her level and all the segments from the respective levels below.

\addtolength{\tabcolsep}{-3pt}
\begin{table}[t]
\centering
\caption{Notations summary} \label{tab:notations}
\begin{tabular}{|c|m{2.9in}|} 
\hline
 Symbol & Definition\\ \hline
 DO     &  A data owner\\  \hline
 O    &  An organization that consists of multiple data users\\ \hline  
 H  &  The hierarchical layout of the O\\  \hline
 $\calL_i$ & The $i^{th}$ level within H where $1\leq i\leq k$\\ \hline
 $\calT_i$ & The $i^{th}$ access tree at $\calL_i$ where $1\leq i\leq k$\\  \hline
 $F_i$  &  The $i^{th}$ data file part where $1 \leq i \leq k$\\  \hline
 $sk_i$ &  The $i^{th}$ symmetric key used to encrypt $F_i$ where $1 \leq i \leq k$\\  \hline  
 $EF_i$ &  The $i^{th}$ encrypted $F_i$ under $sk_i$ where $1 \leq i \leq k$\\  \hline  
 $CT_i$ &  The $i^{th}$ ciphertext (encrypt $sk_i$ under $\calT_i$) at $\calL_i$ where $1 \leq i \leq k$\\  \hline   
 $DU_j$ &  The $j^{th}$ data user in set DU where $1 \leq j \leq m$\\ \hline
 $SK_j$ &  The $j^{th}$ data user's private key where $1 \leq j \leq m$\\ \hline 
 $A_{j,u}$ & The $u^{th}$ attribute within the $j^{th}$ data user's attribute set $\bbA_j$ where $1 \leq u \leq n$ and $1 \leq j \leq m$ \\ \hline 
 $x_l^i$ &  The $l^{th}$ node of $\calT_i$ where $1 \leq l \leq \infty$ and $1\leq i\leq k$\\  \hline 
 $k_{x^i_{l}}$ &  Threshold value of $x_l^i$ where $1 \leq l \leq \infty$ and $1\leq i\leq k$\\  \hline 
\end{tabular}
\end{table} 

\subsection{System Model}
The general model of privilege-based data sharing among data users of an organization is illustrated in Fig.~\ref{Fig:Basic Model}. 
In the figure, data users ranked at the higher levels (i.e. possess more privileges) within the hierarchy are granted access to more sensitive data than those ranked at lower levels (i.e. possess fewer privileges). The system consists of four main entities: key-issuer, cloud server, data owner, and data user. 

\begin{itemize}
\item Key-issuer: A fully trusted entity that grants private keys to data users in a system after authenticating their privileges.
\item Cloud server: A non-trusted entity used to store ciphertexts. 
\item Data owner: An individual that owns a data file and wishes to share it with multiple data users of an organization selectively based on their data access privileges. 
\item Data user: An individual that is ranked within a hierarchy of an organization and is interested in decrypting ciphertexts on the cloud. 
\end{itemize}

\begin{figure*}
\centering
\includegraphics[scale=0.6]{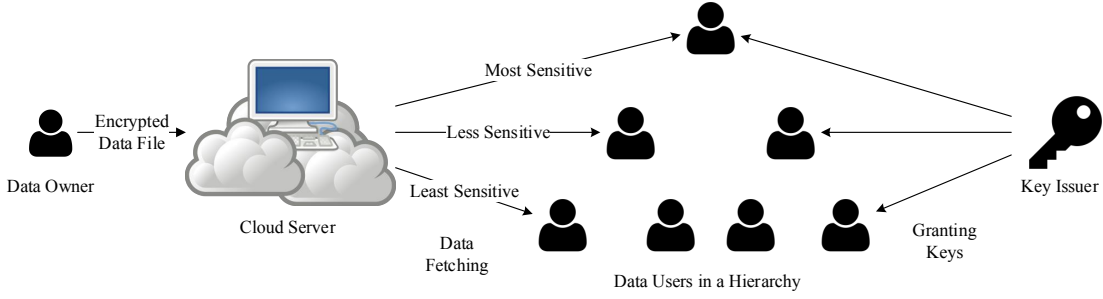}
\caption{General scheme of privilege-based data sharing}
\label{Fig:Basic Model}
\end{figure*}

It is a burden on the data owner to share his/her data file on the cloud as the hierarchy grows (the access privileges increase in number) and/or as the access restrictions become more complex due to an increase in the sensitivity of the data file. 
The data owner wishes to share the data file on the cloud in an efficient manner that is not computationally expensive while minimizing the cloud storage space used.

A trivial solution involves the data owner to use public key encryption. 
Every data user's public key is used to decrypt the part of the data file they are granted access to. 
This ensures that no unprivileged data user will gain access to the data file even if that user is able to download the ciphertext from the cloud server. This solution would require the data owner to encrypt the same part of the data file once for each data user he/she wishes to grant access to. 
On a large scale, public key encryption becomes an inefficient solution due to the increase in the number of encryptions.
It also requires large amounts of storage space making it expensive.

\subsection{Design Goals}
To provide efficient, secure, and privilege-based data sharing to individuals of an organization, we have the following design goals:
\begin{itemize}
\item \emph{Privilege-Based Access}: Data is shared in a hierarchical manner based on user privileges. Data users with more privileges (ranked at the higher levels of the hierarchy) are granted access to more sensitive parts of $\calF$ than those with fewer privileges (ranked at the lower levels of the hierarchy). 

\item \emph{Data Confidentiality}: All parts of $\calF$ are completely protected from the data users that are not privileged (including the cloud) to access the data. Each data user is entitled to access the parts of $\calF$ corresponding to the level they fall in and/or any other parts corresponding to the levels below with respect to the user's level. 

\item \emph{Fine-grained access control}: The data owner has the capability to encrypt any part of $\calF$ using any set of descriptive attributes he/she wishes, limiting access to only authorized data users. The set of descriptive attributes is defined by the data owner at the time of encryption.

\item \emph{Collusion resistant}: Two or more data users at the same/different level can not combine their private keys to gain access to any part of $\calF$ they are not authorized to access independently. 
\end{itemize}

\section{The Proposed P-MOD Scheme}\label{sec:scheme}
This section presents the construction of P-MOD. We assume that file $\calF$ is partitioned into $k$ parts based on data sensitivity. Each part of $\calF$ is independently encrypted and shared among the data users of the system under a privilege-based access structure. 

%Next, the privilege-based access structure which divides the system into $k$ levels with different privileged data access rights is presented. Finally, a detailed scheme construction is introduced providing formal definitions of each function in the scheme. 

\subsection{Data File Partitioning and Encryption} \label{subsec:file}
The DO partitions file $\calF$ into a set of $k$ data sections, that is $\calF=\{F_1,F_2,\dots,F_k \}$. 
Each $F_i \in \calF$ is treated as a new file that is associated with a sensitivity value used to assign access rights to the data users based on their privileges. 

The process of partitioning $\calF$ is performed based on the structure of $\calF$. We assume that $\calF$ consists of at least one record, resulting in multiple ways to handle this process. 
Fig.~\ref{Fig:file_part} presents a typical partitioning of $\calF$.

If $\calF$ consists of a single record, then each $F_i\in\calF$ represents one or more record attribute(s) associated with the record, as shown in case (1) in Fig.~\ref{Fig:file_part}. 
However, if $\calF$ consists of multiple records, then DO has flexibility in choosing how to partition it. 
One approach is to handle each record as a whole, where records are clustered into groups of similar sensitivity. 
In this case, each $F_i\in\calF$ represents one or more record(s), as shown case (2) in Fig.~\ref{Fig:file_part}. 
Alternatively, partitioning can be performed over specific record attributes, versus the whole record. 
In this case, $F_i\in\calF$ represents one or more record attribute(s) of the records, as shown case (3) in Fig.~\ref{Fig:file_part}. 
\begin{figure}
\centering
\includegraphics[width=\columnwidth]{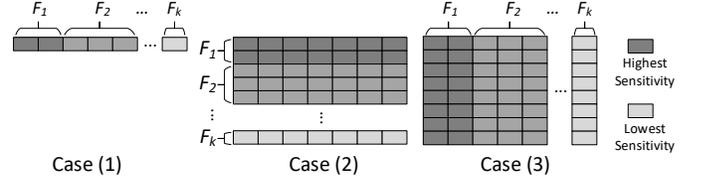}
\caption{File partitioning}
\label{Fig:file_part}
\end{figure}

%The \del{record attributes are treated as a whole (vertical columns) and the} granted data users can decrypt \rep{them}{it} based on their assigned privileges.

Each $F_i \in\calF$ is treated as a new data file. Suppose $F_1$ contains the most sensitive information of $\calF$ that can only be accessed by a data user at the highest level $\calL_1$ and $F_k$ contains the least sensitive information of $\calF$ that can be accessed by all data users at any level of the hierarchy. 

Before the DO uploads $\{F_1,F_2,\dots,F_k \}$ to the cloud, each $F_i\in\calF$ is encrypted separately using a symmetric encryption algorithm such as the Advanced Encryption Algorith (AES)~\cite{daemen2013design} with a secret key $sk_i$ to produce an encrypted file $EF_i$. 
\begin{equation}
EF_i = \enc_{sk_i}(F_i).
\end{equation}

For key selection, the DO randomly selects $sk_1$. The remaining symmetric keys $\{sk_2,\dots,sk_k\}$ are derived from $sk_1$ using a public one-way cryptographic hash function $h$, that is
\begin{equation} \label{keys}
sk_{i+1} = h(sk_i).
\end{equation}

After $sk_i$ is used to symmetrically encrypt $F_i$, it is encrypted under P-MOD that will be discussed in Section \ref{Sec:P-MOD-Construction}, to be accessed only by the data users that have been granted the privilege of access. 

The privileged data users that are successful in obtaining $sk_i$ at level $i$ can derive $\{sk_{i+1},\dots,sk_k\}$ using equation (\ref{keys}). However, given the properties of hash function $h$, $sk_i$ cannot be used to derive any of the private keys $\{sk_1,\dots,sk_{i-1}\}$. In other words, a data user can only obtain private keys below his/her level, but cannot obtain any private keys from the levels above.

\subsection{Privilege-Based Access Structure}
The privileged-based access structure divides an O into $k$ levels of privileges, $\{\calL_1,\calL_2,\dots,\calL_k \}$. 
The DO defines a $\calT_i$ at each corresponding $\calL_i$. Each $\calT_i$ is associated with the appropriate leaf nodes (attributes nodes) that define the privilege level. 
Each user within the set DU of O falls into a specific level based on the attributes that the data user possesses. 
The data user that possesses the correct set of attributes which can satisfies $\calT_i$ at $\calL_i$ is able to gain access to parts $\{F_i,F_{i+1},\dots,F_k \}$.

Fig.~\ref{Fig:Model} represents the general privilege-based access structure of  P-MOD. 
At each $\calL_i$, a $\calT_i$ may consist of non-leaf nodes and leaf nodes. 
The non-leaf nodes are in the form of threshold gates, represented as `G', and the leaf nodes are in the form of attributes, represented as `A'. 
The figure represents a privilege-based multilevel access structure. 
It is important to note that each $\calT_i$ could be constructed from any number and layout of nodes based on the privileges at each level.
\begin{figure}
\centering
\includegraphics[width=.95\columnwidth]{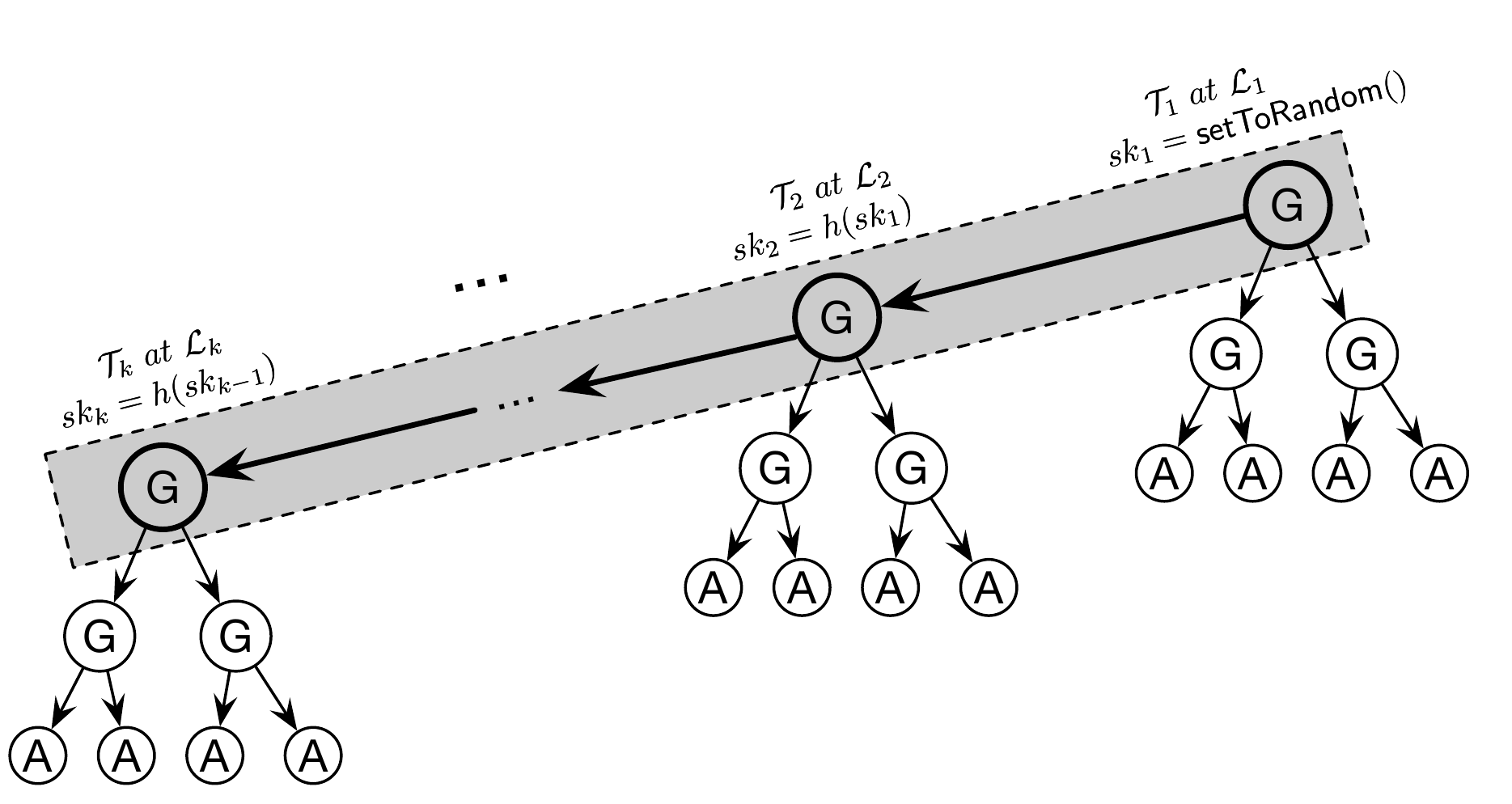}
\caption{Privilege-based multilevel access structure}
\label{Fig:Model}
\end{figure}

The data user is successful in obtaining $F_i$ if and only if the user possesses a correct set of attributes that satisfy $\calT_i$ or any of the access trees $\{ \calT_1,\dots,\calT_{i-1} \}$ corresponding to the higher levels $\{ \calL_1,\dots,\calL_{i-1} \}$ in order to obtain any of $\{sk_1,\dots,sk_i \}$.

\subsection{The Proposed P-MOD Construction} \label{Sec:P-MOD-Construction}
The scheme is based on the construction presented in \cite{bethencourt2007ciphertext} and formally divides the process into four main functions: $\Setup, \KeyGen,\Encrypt, \Decrypt$.

\paragraph{$\Setup(1^\kappa)$}
This is a probabilistic function carried out by the key-issuer. The $\Setup$ function takes a security parameter $\kappa$ and randomly chooses values $\alpha, \beta\in\bbZ_p$. 
The outputs of this function are public key $PK$ and master key $MK$ defined as:
\begin{eqnarray}
PK &=& \{ \bbG_0, g, B=g^{\beta}, e(g,g)^{\alpha} \},\\
MK &=& \{ \beta, g^{\alpha}\}.
\end{eqnarray}

\paragraph{$\KeyGen(MK,\bbA_j)$}
This is also a probabilistic function carried out by the key-issuer. 
The inputs to this function are $MK$ generated by the $\Setup$ function, and the $j^{th}$ data user's attribute set $\bbA_j$, where $A_{j,u}\in \bbA_j$ represents the $u^{th}$ attribute within the set. 
The $\KeyGen$ function outputs a unique private key $SK_j$ for the data user. 
In order to guarantee a unique $SK_j$, it generates a random value $r_j\in\bbZ_p$ and incorporates it within the private key. 
Based on the number of attributes in the input set $\bbA_j$, the $\KeyGen$ function also generates a random value $r_{j,u}\inr \bbZ_p$ for each attribute within the set. The $SK_j$ is defined as:
\begin{equation} \label{privatkey}
\begin{aligned}
SK_j & = \big(D_j = g^{(\alpha + r_j)/ \beta}, \\
& \{ D_{j,u} = g^{r_j} \cdot h(u)^{r_{j,u}}, D_{j,u}' = g^{r_{j,u}}\,|\, \forall A_{j,u} \in \bbA_j\}\big)
\end{aligned}       
\end{equation}

The purpose of the randomly selected $r_j$ is to ensure that each $SK_j$ is unique and the attribute components within the $SK_j$ are associated. 
It should be infeasible for data users to collude by combining components of their private keys ($D_{j,u}$ and $D_{j,u}'$) to decrypt data beyond their individual access rights.
In other words, the attribute components from different private keys cannot be combined to access unauthorized data. 

\paragraph{$\Encrypt(PK$,$sk_i$,$\calT_i)$} 
This is a probabilistic function carried out by the DO. The inputs to this function are $PK$, the public key generated by the $\Setup$ function, the symmetric key $sk_i$ derived in equation (\ref{keys}), and the access tree $\calT_i$ that defines the authorized set of attributes at $\calL_i$. The output of this function is the ciphertext $CT_i$. 

For $sk=\{sk_1,\dots,sk_k\}$, the $\Encrypt$ function will run $k$ times, once for each $sk_i$. 
At each run, the $\Encrypt$ function chooses a polynomial $q_{x_l^i}$ with degree $d_{x_l^i} = k_{x_l^i} - 1$ for each node $x_l^i \in \calT_i$.  
The process of assigning polynomials to each $x_l^i$ occurs in a top-bottom approach starting from the root node in $\calT_i$.  
The $\Encrypt$ function chooses a secret $s \in \bbZ_p$ and sets the value of $q_{x_1^i}(0) = s$. 
Next, it randomly chooses the remaining points of the polynomial to completely define it. 
For any other node $x_l^i \in \calT_i$, the $\Encrypt$ function sets the value $q_{x_l^i}(0) = q_{\mbox{\scriptsize{parent}}}(\mbox{index}(x_l^i))$, where $q_{\mbox{\scriptsize{parent}}}$ is $x_l^i$'s parent node polynomial. The remaining points of those polynomials are then randomly chosen.

Let $X_i$ be the set of leaf nodes in $\calT_i$. The $CT_i$ at $\calL_i$ is then constructed as:
\begin{equation} \label{cipher}
\begin{aligned}
CT_i & = \big(\calT_i, \tilde{C}_i = sk_i \cdot e(g,g)^{\alpha\cdot s}, C_i = g^{\beta\cdot s}, \\
& \{ C_{x_l^i} = g^{q_{x_l^i}(0)}, C_{x_l^i}' = h(x_l^i)^{q_{x_l^i}(0)} \,|\, \forall x_l^i \in X_i \}\big) .     
\end{aligned}
\end{equation}

\paragraph{$\Decrypt(CT_i,SK_j)$}
This is a deterministic function carried out by the data user. 
The inputs to this function are the ciphertext $CT_i$ generated by the $\Encrypt$ function corresponding to $\calL_i$, and the $j^{th}$ data user's private key $SK_j$. 

The $\Decrypt$ function operates in a recursive manner propagating through the nodes in $\calT_i$ by calling a recursive function defined as $\DecryptNode(CT_i,SK_j,x_l^i)$. 
The inputs to this function are $CT_i$, $SK_j$ and a node $x_l^i$ within $\calT_i$. 
If the attributes incorporated within the data user's private key satisfy the nodes within $\calT_i$, the data user can decrypt $CT_i$. 

$\DecryptNode(CT_i,SK_j,x_l^i)$ performs differently depending on whether $x_l^i$ is a leaf or non-leaf node.
If $x_l^i$ is a leaf node and $\mbox{att}(x_l^i) \notin \bbA_j$, $\DecryptNode(CT_i,SK_j,x_l^i)$ returns $\emptyset$, otherwise, $\mbox{att}(x_l^i) = A_{j,u} \in \bbA_j$ and $\DecryptNode(CT_i,SK_j,x_l^i)$ is defined as:
\arraycolsep=2pt 
%\thinmuskip=-2mu \medmuskip=-2mu \thickmuskip=-2mu
\begin{eqnarray} 
\mathsf{DecryptNode}(CT_i,SK_j,x_l^i) &=& \frac{e(D_{j,u},C_{x_l^i})}{e(D_{j,u}',C_{x_l^i}')}\nonumber \\
& =& \frac{e(g^{r_j}\cdot h(u)^{r_{j,u}},g^{q_{x_l^i}(0)})}{e(g^{r_{j,u}},h(x_l^i)^{q_{x_l^i}(0)})}\nonumber\\
& =& e(g,g)^{r_j\cdot q_{x_l^i}(0)}. 
\end{eqnarray}
\arraycolsep=2pt 

If $x_l^i$ is a non-leaf node, $\DecryptNode(CT_i,SK_j,x_l^i)$ operates recursively. For each node $z_{l,c}^i$ that is a child of $x_l^i$, $\DecryptNode(CT_i,SK_j,z_{l,c}^i)$ is computed and the output is stored in $F_{z_{l,c}^i}$.

This recursive function is based on Lagrange interpolation.  The Lagrange coefficient $\Delta_{a,\bbA_j}$ for $a \in \bbZ_p$ and the set of attributes $\bbA_j$ is defined as:
\begin{equation}
\Delta_{a,\bbA_j}(x)=\prod_{k\in\bbA_j,k\neq a}\frac{x-k}{a-k}.
\end{equation}

Let $\bbA_{j,x_l^i}$ be an arbitrary $k_{x_l^i}$-sized set of child nodes $z_{l,c}^i$ such that $F_{z_{l,c}^i} \neq \emptyset$. If no such set exists then the function returns $F_{z_{l,c}^i} = \emptyset$. Otherwise, $F_{z_{l,c}^i}$ is computed using Lagrange interpolation as follows:
\begin{equation}
\begin{aligned}
F_{x_l^i} 
    & = \prod_{z_{l,c}^i \in \bbA_{j,x_l^i}} F_{z_{l,c}^i}^{\Delta_{a,\bbA_{j,x_l^i}'}(0)} \\
    & = \prod_{z_{l,c}^i \in \bbA_{j,x_l^i}} \left(e(g,g)^{r_j\cdot q_{z_{l,c}^i}(0)}\right)^{\Delta_{a,\bbA_{j,x_l^i}'}(0)} \\
    & = \prod_{z_{l,c}^i \in \bbA_{j,x_l^i}} \left(e(g,g)^{r_j\cdot q_{\rm{parent}(z_{l,c}^i)}(\rm{index}(z_{l,c}^i))}\right)^{\Delta_{a,\bbA_{j,x_l^i}'}(0)} \\
    & = \prod_{z_{l,c}^i \in \bbA_{j,x_l^i}} e(g,g)^{r_j\cdot q_{x_l^i}(i) \cdot \Delta_{a,\bbA_{j,x_l^i}'}(0)} \\
    & = e(g,g)^{r_j \cdot q_{x_l^i}(0)},
\end{aligned}
\end{equation}
where $a=\rm{index}(z_{l,c}^i)$ and $\bbA_{j,x_l^i}' = \{ \mbox{index}(z_{l,c}^i)\colon z_{l,c}^i \in \bbA_{j,x_l^i} \}$. 

If the attributes in $\bbA_j$ satisfy $\calT_i$, then the following is computed at the root node $x_1^i$.
\begin{equation} \label{eq_A}
\begin{aligned}
R_i & = \mathsf{DecryptNode}(CT_i,SK_j,x_1^i) \\
    & = e(g,g)^{r_j \cdot q_{x_1^i}(0)} \\
    & = e(g,g)^{r_j \cdot s}.
\end{aligned}
\end{equation}

To obtain $sk_i$ from the result derived in equation~(\ref{eq_A}), we compute the following:
\begin{equation} \label{eq_sk}
\frac{\tilde{C}_i}{\frac{e(C_i,D_j)}{R_i}} = \frac{sk_i \cdot e(g,g)^{\alpha\cdot s}}{\frac{e\left(g^{\beta\cdot s},g^{(\alpha + r_j)/ \beta}\right)}{e(g,g)^{r_j\cdot s}}} = sk_i.
\end{equation}

At this point, the data user can simply decrypt $EF_i$ using the $sk_i$ derived from equation (\ref{eq_sk}) to obtain the plaintext $F_i$ as follows:
\begin{equation}   \label{F_dec}
F_i = \dec_{sk_i}(EF_i).
\end{equation}

If the data user is successful in obtaining $sk_i$ and is interested in attaining any information in the lower levels with respect to $\calL_i$, he/she can compute the symmetric keys of the lower level as previously mentioned in equation~(\ref{keys}). All the data files at the lower levels are decrypted using equation (\ref{F_dec}).

\section{Security Analysis \label{sec:security}}

In this section, the security of P-MOD is analyzed. 
It is assumed that a symmetric encryption technique such as AES is used to secure each individual data file $F_i \in\calF$. 
It is also assumed that the process of attribute authentication between a data user and the key-issuer, in order for the data user to obtain a private key, is secure and efficient. 
Therefore, the main focus of this section is to provide a formal proof of security for P-MOD based on the work presented in \cite{waters2011ciphertext}. The security proof provided in this section is based on ciphertext indistinguishability which proves that the adversary is not able to distinguish pairs of ciphertexts. 
A cryptosystem is considered to be secure under this property if the probability of an adversary to identify a data file that has been randomly selected from a two-element data file chosen by the adversary and encrypted does not significantly exceed $\frac12$.
We first present an Indistinguishability under Chosen-Plaintext Attack (IND-CPA) security game. 
Next, based on the IND-CPA security game, a formal proof of security is provided for  P-MOD. 

IND-CPA is a game used to test for security of asymmetric key encryption algorithms. 
In this game, the adversary is modeled as a probabilistic polynomial-time algorithm. 
The algorithms in the game must be completed and the results returned within a polynomial number of time steps. The adversary will choose to be challenged on an encryption under a leveled access tree $\calT^*$. 
The adversary can impersonate any data user and request many private keys $SK_j$. 
However, the game rules require that any attribute set $\bbA_j$ that the adversary claims to possess does not satisfy $\calT^*$.
The security game is divided into the following steps:
\begin{enumerate}
\item \textit{Initialization}: The adversary selects an access tree $\calT^*$ to be challenged against and commits to it (i.e. the adversary will not change it throughout the game).

\item \textit{Setup}: The challenger runs the $\Setup$ function and sends the public key $PK$ to the adversary. 

\item \textit{Phase 1}: The adversary requests multiple private keys ($SK_1,\dots,SK_{q_1}$) corresponding to $q_1$ different sets of attributes ($\bbA_1,\dots,\bbA_{q_1}$).

\item \textit{Challenge}: The adversary submits two equal length data files $F_0$ and $F_1$ to the challenger. The adversary also sends $\calT^*$ such that none of ($SK_1,\dots,SK_{q_1}$) generated from Phase 1 contain correct sets of attributes that satisfy it. The challenger flips a random coin $\mu$ and encrypts $F_{\mu}$ under $\calT^*$. Finally, the challenger sends the ciphertext $CT^*$ generated according to equation (\ref{cipher}) to the adversary.

\item \textit{Phase 2}: Repeat phase 1 with the restriction that none of the newly generated private keys ($SK_{q_1+1},\dots,SK_{q}$) corresponding to the $q - (q_1+1)$ different sets of attributes ($\bbA_{q_1+1},\dots,\bbA_q$) contain correct sets of attributes that satisfy $\calT^*$.

\item \textit{Guess}: The adversary outputs a guess $\mu'$ of $\mu$. The adversary wins the security game if $\mu' = \mu$ and loses otherwise. 
\end{enumerate}

\begin{definition} [Secure against adaptively chosen plaintext attack.]
P-MOD is said to be secure against an adaptively chosen plaintext attack if any polynomial-time adversary has only a negligible advantage in the security game, where the advantage is defined as $Adv = Pr[\mu'=\mu] - \frac{1}{2}$.
\end{definition}

The security of P-MOD is reduced to the hardness of the DBDH problem. By proving that a single $CT_i$ at any $\calL_i$ is secure, the whole system is proved to be secure since all ciphertexts at any level follow the same rules.
\begin{theorem}
P-MOD is secure against adaptively chosen plaintext attack if the DBDH assumption holds.
\end{theorem}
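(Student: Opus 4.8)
The plan is to prove security by reduction: assuming a polynomial-time adversary $\calA$ wins the IND-CPA game of P-MOD with non-negligible advantage $\varepsilon$, I would construct a simulator $\calB$ that solves DBDH with advantage $\varepsilon/2$, contradicting the DBDH assumption. As the text preceding the statement notes, it suffices to argue security for a single $CT_i$ at one level $\calL_i$, since every level uses the identical $\Encrypt$ construction and the chain $sk_{i+1}=h(sk_i)$ only propagates keys downward. Concretely, $\calB$ receives a DBDH instance $(g,g^a,g^b,g^c,T)$, where $T=e(g,g)^{abc}$ or $T\inr\bbG_1$ according to a hidden coin $\nu$, and plays challenger for $\calA$. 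In the \emph{Initialization} step $\calA$ commits to $\calT^*$ (selective model); then $\calB$ implicitly sets the master value $\alpha=ab$ by publishing $e(g,g)^{\alpha}=e(g^a,g^b)$, picks $\beta\inr\bbZ_p$ honestly, and runs $\Setup$ to hand $\calA$ the key $PK=\{\bbG_0,g,B=g^{\beta},e(g,g)^{\alpha}\}$. Note $\calB$ knows $\beta$ but not $g^{\alpha}=g^{ab}$.

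The crux is answering the $\KeyGen$ queries of \emph{Phase 1} and \emph{Phase 2}. For each queried attribute set $\bbA_j$ that does not satisfy $\calT^*$, $\calB$ must return a key $SK_j$ of the correct form without knowing $g^{ab}$. The standard technique exploits precisely this non-satisfaction constraint: modelling $h$ as a programmable random oracle, $\calB$ programs the hash values on the relevant attributes and chooses the per-key randomness $r_j$ and per-attribute randomness $r_{j,u}$ so that the unknown $g^{ab}$ factor implicit in $D_j=g^{(\alpha+r_j)/\beta}$ and $D_{j,u}=g^{r_j}h(u)^{r_{j,u}}$ cancels, yielding a key distributed identically to a genuine $\KeyGen$ output. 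This cancellation is feasible only because $\bbA_j$ fails to satisfy $\calT^*$: by the Lagrange/secret-sharing structure of the threshold access tree, one can assign shares consistently along $\calT_i$ so that the component tied to $\alpha$ is never forced onto an authorized subset.

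When $\calA$ submits the two equal-length challenge files $F_0,F_1$ (equivalently the two candidate messages embedded in $\tilde C$), $\calB$ flips $\mu$, embeds the DBDH secret by setting the encryption randomness $s=c$, and returns $CT^*=(\calT^*,\tilde C=sk_{\mu}\cdot T,\,C=(g^c)^{\beta},\{C_{x_l^i},C_{x_l^i}'\})$, with the leaf components built from $g^c$ and the programmed shares exactly as in equation~(\ref{cipher}). If $T=e(g,g)^{abc}=e(g,g)^{\alpha s}$ then $CT^*$ is a faithful encryption, whereas if $T\inr\bbG_1$ then $\tilde C$ is uniform and independent of $\mu$. $\calB$ outputs $\nu'=0$ (``$T$ real'') iff $\calA$ returns $\mu'=\mu$. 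A routine case split finishes the bound: conditioned on $T$ real, $\calA$ sees a genuine game and wins with probability $\tfrac12+\varepsilon$; conditioned on $T$ random, $sk_{\mu}$ is perfectly hidden so $\Pr[\mu'=\mu]=\tfrac12$. Averaging over $\nu$ gives $\calB$ advantage $\varepsilon/2$, forcing $\varepsilon$ to be negligible.

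The main obstacle is the key-query simulation. The setup embedding, the challenge formation, and the final averaging are essentially bookkeeping, but producing keys that are perfectly distributed and mutually consistent with the programmed random oracle while the unknown $\alpha=ab$ never appears in the clear---and verifying that the threshold gates of $\calT^*$ admit the required share assignment \emph{exactly} when $\bbA_j$ does not satisfy $\calT^*$---is where the real difficulty lies and where the reliance on the adversary's selective commitment to $\calT^*$ is essential.
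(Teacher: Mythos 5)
Your reduction has the same skeleton as the paper's proof: a simulator embeds the DBDH instance $(g,g^a,g^b,g^c,T)$ into the public key, answers key queries for non-satisfying attribute sets, forms the challenge with encryption randomness $s=c$ so that $\tilde C$ becomes $sk_\mu\cdot T$ up to computable factors, and converts the adversary's guess into a DBDH decision. The differences you introduce are cosmetic (you set $\alpha=ab$ with $\beta$ chosen honestly, the paper sets $\alpha=\alpha^*+ab$, $\beta=b$, $r_j=r_j'-b$; your advantage $\varepsilon/2$ versus the paper's $\varepsilon$ is only a difference of conventions). The problem is that the step you yourself flag as the crux --- answering the $\KeyGen$ queries --- cannot be completed by the technique you invoke, so the proposal has a genuine gap.

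The cancellation argument you describe (``assign shares consistently along the tree so that the component tied to $\alpha$ is never forced onto an authorized subset'') is the GPSW \emph{key-policy} argument, and it has no counterpart here: in this scheme the access tree lives in the ciphertext, while a key is a flat attribute set whose components $D_j=g^{(\alpha+r_j)/\beta}$, $D_{j,u}=g^{r_j}h(u)^{r_{j,u}}$, $D_{j,u}'=g^{r_{j,u}}$ all share the single blinding exponent $r_j$; there are no shares inside a key to assign. Concretely, with $\alpha=ab$ and $\beta$ known, $D_j$ is computable only if you implicitly set $r_j=r_j'-ab$, so every $D_{j,u}$ carries a factor $g^{-ab}$; cancelling it forces you to program $h(u)=g^{b\tau_u+\theta_u}$ with $\tau_u\neq 0$ and to set $r_{j,u}=a/\tau_u+r_{j,u}'$ implicitly. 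But every leaf attribute $x$ of $\calT^*$ must instead be programmed with $\tau_x=0$, because its challenge component $C_{x}'=h(x)^{q_{x}(0)}$, where $q_{x}(0)$ has a nonzero $c$-coefficient (the root secret is $c$ and each node passes its value down), would otherwise require $g^{bc}$, which the simulator does not possess. The game only forbids $\bbA_j$ from \emph{satisfying} $\calT^*$; it happily allows $\bbA_j$ to contain some leaves of $\calT^*$ (one conjunct of an AND gate, say), and for exactly those queries the two programming requirements collide and the simulation breaks. This is not routine bookkeeping: it is the known reason why CP-ABE of this shape is proven in the generic group model (Bethencourt et al.) or under $q$-type assumptions with a richer embedding (Waters), rather than under plain DBDH. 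For completeness, the paper's own proof does not clear this bar either --- its simulated $D_j=g^{(\alpha^*+ab+r_j'-b)/b}$ and challenge component $C=g^{bc}$ are not computable from the DBDH instance, and its key simulation never actually uses the hypothesis that $\bbA_j$ fails to satisfy $\calT^*$ --- so your attempt and the paper stall at the same point, but that does not turn your sketch into a proof.
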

\begin{proof}
Assume there is an adversary that has non-negligible advantage $\varepsilon = \mathsf{Adv}_\calA$. 
We construct a simulator that can distinguish a DBDH element from a random element with advantage $\varepsilon$. 
Let $e\colon\bbG_0 \times \bbG_0 \rightarrow \bbG_1$ be an efficiently computable bilinear map and $\bbG_0$ is of prime order $p$ with generator $g$. 
The DBDH challenger begins by selecting the random parameters: $a,b,c\inr\bbZ_p$. Let $g\in\bbG_0$ be a generator and $T$ is defined as $T=e(g,g)^{abc}$ if $\mu=0$, and $T=R$ otherwise, 
where $\mu\inr\{0,1 \}$ and $R\inr\bbG_1$. The simulator acts as the challenger in the following game:

\begin{enumerate}
\item \textit{Initialization}: The simulator accepts the DBDH challenge requested by the adversary who selects an access structure $\calT^*$.

\item \textit{Setup}: The simulator runs the $\Setup$ function. It chooses a random $\alpha^*\inr\bbZ_p$ and computes the value $\alpha=\alpha^*+ab$. Next, it simulates $e(g,g)^{\alpha} \leftarrow e(g,g)^{\alpha^*+ab}=e(g,g)^{\alpha^*}e(g,g)^{ab}$ and $B=g^{\beta}\leftarrow g^b$, where $b$ represents a simulation of the value $\beta$. Finally, it sends all components of $PK = \{\bbG_0, g, B=g^{\beta}, e(g,g)^{\alpha}\}$ to the adversary.

\item \textit{Phase 1}: In this phase, the adversary requests multiple private keys ($SK_1,\dots,SK_{q_1}$) corresponding to $q_1$ different sets of attributes ($\bbA_1,\dots,\bbA_{q_1}$). After receiving an $SK_j$ query for a given set $\bbA_j$ where $\bbA_j \notin \calT^*$ (i.e. $\forall A_{j,u} \in \bbA_j$ does not satisfy $\calT^*$), the simulator chooses a random $r_j'\inr\bbZ_p$ and defines $r_j=r_j' - b$. Next, it simulates $D_j = g^{(\alpha + r_j)/\beta} \leftarrow g^{\alpha/\beta}g^{r_j/\beta} = g^{(\alpha^* + ab)/b}g^{(r_j'-b)/{b}}$. Then, $\forall A_{j,u} \in \bbA_j$, it selects a random $r_{j,u}\inr\bbZ_p$ and simulates $D_{j,u} = g^{r_j} \cdot h(u)^{r_{j,u}} \leftarrow g^{r_j'-b}h(u)^{r_{j,u}}$ and $D_{j,u}' \leftarrow g^{r_{j,u}}$. Finally, the simulated values of $SK_j = (D_j, \{D_{j,u}, D_{j,u}'\,|\,r_{j,u}\inr\bbZ_p,\ \forall A_{j,u} \in \bbA_j \})$ are sent to the adversary.

\item \textit{Challenge}: The adversary sends two plaintext data files $sk_0$ and $sk_1$ to the simulator who randomly chooses a $\mu\inr\{0,1\}$ by flipping a coin to select one of the files. The simulator then runs the $\Encrypt$ function and derives a ciphertext $CT^*$. It simulates $\tilde{C} = sk_{\mu}\cdot e(g,g)^{\alpha\cdot sk_{\mu}} \leftarrow sk_{\mu}\cdot e(g,g)^{(\alpha^*+ab)c}= sk_\mu\cdot T e(g,g)^{\alpha^* c}$, where $c$ represents a simulation of the value $sk_{\mu}$ and $T=e(g,g)^{abc}$. Next, it simulates $C = g^{\beta\cdot sk_{\mu}} \leftarrow g^{bc}$. Finally, for each attribute $x\in X^*$ (set of leaf nodes in $\calT^*$) it computes $C_x=g^{q_x(0)}$ and $C_x'=h(x)^{q_x(0)}$.  The simulated values of $CT^*=\{\calT^*,\tilde{C},C,\forall x\in X^*\colon C_x, C_x'\}$ are then sent to the adversary.

\item \textit{Phase 2}: Repeat \textit{Phase 1}. 

\item \textit{Guess}: The adversary tries to guess the value $\mu$. If the adversary guesses the correct value, the simulator outputs 0 to indicate that $T=e(g,g)^{abc}$, or 1 to indicate that $T=R$, a random group element in $\bbG_1$.
\end{enumerate}

Given a simulator $\calA$, if $T=e(g,g)^{abc}$, then $CT^*$ is a valid ciphertext, $Adv = \varepsilon$ and
\begin{equation}\label{eq-adv-advantage}
Pr\left[\calA\left(g,g^a,g^b,g^c,T=e(g,g)^{abc}\right)=0\right] = \frac{1}{2} + \varepsilon.
\end{equation}

If $T=R$ then $\tilde{C}$ is nothing more than a random value to the adversary. Therefore, 
\begin{equation}\label{eq-random-proof}
Pr\left[\calA\left(g,g^a,g^b,g^c,T=R\right)=0\right] = \frac{1}{2}.
\end{equation}
From equation (\ref{eq-adv-advantage}) and equation (\ref{eq-random-proof}), we can conclude that 

\begin{equation}
 \begin{split}
  &\left|Pr\left[\calA\left(g,g^a,g^b,g^c,T=e(g,g)^{abc}\right)=0\right] \right. \\
  &\left. - Pr\left[\calA\left(g,g^a,g^b,g^c,T=R\right)=0\right]\right|=\varepsilon. 
 \end{split}
\end{equation}
Therefore, the simulator plays the DBDH game with a non-negligible advantage and the proof is complete.
\end{proof}

\section{Performance Analysis} \label{sec:performance}

In this section, we present a performance analysis for P-MOD. 
We compare it with two existing schemes, CP-ABE~\cite{bethencourt2007ciphertext} and FH-CP-ABE \cite{wang2016efficient}.

\subsection{CP-ABE in a Hierarchical Organization}

CP-ABE \cite{bethencourt2007ciphertext} handles sharing of independent pieces of data based on independent access policies.  
It was not designed to support a privilege-based access structure (i.e. hierarchical organization).
Therefore, to adapt CP-ABE to a privilege-based access structure, the $\Encrypt$ function runs once for each level. 
However, if it were to be used in a hierarchical organization, there would be a trade off between the key management and the complexity of the encryption and decryption processes.
Fig. \ref{Fig:cpabe-hierarchy} shows the two general cases in which CP-ABE is utilized to share data with users in a hierarchical organization.
\begin{figure}
	\centering
	\includegraphics[width=.9\columnwidth]{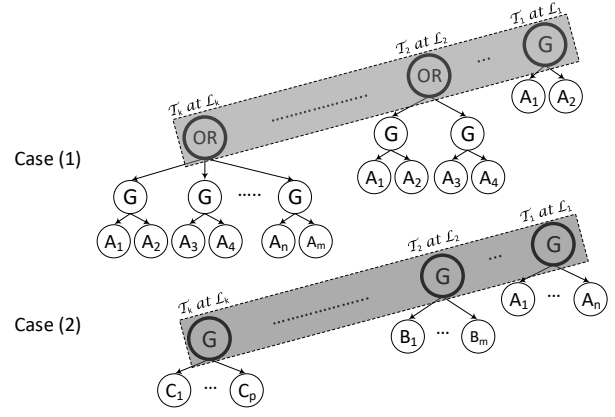}
	\caption{CP-ABE utilized in a hierarchical organization}
	\label{Fig:cpabe-hierarchy}
\end{figure}

In case (1), key management is favored over encryption and decryption complexities. 
The individuals at each level possess attributes that define their privileges and the private keys to encrypt the data files.
The size of each private key is therefore optimized and the key management process becomes less resource-intensive.
However, since levels are independent, attributes must be repeatedly incorporated into each access tree. 
As a result, the sizes of the access trees at lower levels will increase, as shown in case (1) of Fig. 4. 
For complex organizations with a large number of levels,  the access trees will become even larger. 
This results in an increase in encryption and decryption complexities.

On the other hand, case (2) favors minimizing encryption and decryption complexities over key management.
Each data file is encrypted under an access policy with a set of unique attributes at each level without considering privileges and relationships.
This results in simpler access trees at each level of the hierarchy and therefore lower encryption and decryption complexities.
However, to grant access to an individual at a specific level, the key-issuer must generate a private key for that individual that incorporates the attributes at that level and all the levels below.
Complex hierarchies that include a large number of attributes, could result in complicated key management.
As a result, private keys will require incorporating a large number of attributes. 

\subsection{Computational Complexity}
The computational complexity includes both encryption complexity and decryption complexity that are formulated using the number of group operations $f_{\bbG_0}$, $f_{\bbG_1}$ for groups $\bbG_0$, $\bbG_1$ respectively and the number of bilinear mapping operations $e$ involved in the $\Encrypt$ and the $\Decrypt$ functions for each scheme. 

\subsubsection{Encryption Complexity} \label{enc_comp_complexity} 
Encryption complexity represents the number of basic operations involved in generating the ciphertext from the plaintext. It is formulated based on the $\Encrypt$ function which involves group operations $f_{\bbG_0}$ and $f_{\bbG_1}$. 

The number of operations involved in sharing an independent piece of data using CP-ABE is $(2|X|+1)$ and $2$ for $f_{\bbG_0}$ and $f_{\bbG_1}$ respectively, where $|X|$ denotes the number of leaf nodes (attributes) of the access tree $\calT$.
For a hierarchical organization, we present the encryption complexity of case (1) in Fig. \ref{Fig:cpabe-hierarchy} as it involves a relationship between all levels.
In a real life application, case (2) would not satisfy a hierarchical organization as it requires attributes to be shared by all users regardless which level they belong to.
The number of operations involved in the encryption process is formulated as $(2(|X_1|+\cdots+|X_k|)+k)$ and $2k$ for $f_{\bbG_0}$ and $f_{\bbG_1}$ respectively, where $|X_1|,|X_2|,\cdots,|X_k|$ are the number of leaf nodes (attributes) associated with access trees $\calT_1,\calT_2, \cdots,\calT_k$ respectively. 
However, reuse of attributes needed at each level in this scheme increases the computational complexity, making it an overall inefficient solution for hierarchical organizational structures.

FH-CP-ABE \cite{wang2016efficient} uses a single access tree $\calT$ to encrypt all data files to be shared. 
The $\Encrypt$ function only needs to run once for all data files to generate a single ciphertext. 
The number of operations involved are $(2|X|+k)$ and $(2v|\bbA_T| +2k)$ for operations $f_{\bbG_0}$ and $f_{\bbG_1}$ respectively, where $|\bbA_T|$ is the number of transport nodes (levels), and $v$ is the number of children nodes associated with a transport node. 
The single access structure merges access policies across levels into one large policy. 
The combined policy grants users access at different levels. 
This results in large sets $X$, $\bbA_T$ and $v$. 
The encryption complexity in this scheme is dependent on the size of these sets and can grow exponentially depending on how the access tree $\calT$ is constructed.

P-MOD generates multiple ciphertexts (one at each level) of the hierarchy. 
The number of operations involved is $(2(|Y_1|+\cdots+|Y_k|)+k)$ and $2k$ for $f_{\bbG_0}$ and $f_{\bbG_1}$ respectively, where $|Y_1|,|Y_2|,\cdots,|Y_k|$ are the number of leaf nodes (attributes) associated with access trees $\calT_1,\calT_2, \cdots,\calT_k$ respectively.
However, P-MOD leverages the relation between levels of a hierarchy. 
Therefore, it does not require reuse of attributes in levels like CP-ABE.  
In this case, $|Y_1|,|Y_2|,\cdots,|Y_k|$ are all specific to each level minimizing the number of attributes in each access tree.
As a result, $|Y_i| < |X_i|, \forall i \in k$.
Therefore, the encryption complexity of P-MOD outperforms that of CP-ABE. 

When comparing the encryption complexities of P-MOD with FH-CP-ABE, P-MOD minimizes the number of operations involved in contrast to those needed by FH-CP-ABE. 
This is because the encryption process for FH-CP-ABE involves an access tree that contains all the access policies for all the levels in contrast to P-MOD which involves smaller access trees, each one limited to an access policy of a single level. 
As a result, when comparing the encryption complexity of these two schemes, P-MOD is more efficient than FH-CP-ABE.

\subsubsection{Decryption Complexity} \label{dec_comp_complexity}
Decryption complexity represents the number of basic operations involved in decrypting the ciphertext into plaintext. It is formulated based on the $\Decrypt$ function which involves bilinear operations $e$ and group operations $f_{\bbG_1}$. 

For a single $\Decrypt$ run, CP-ABE \cite{bethencourt2007ciphertext} involves $(2|\bbA_j|)$ and $(2|S|+2)$ number of operations $e$ and $f_{\bbG_1}$ respectively, where $|\bbA_j|$ is the number of attributes possessed by the $j^{th}$ data user and $|S|$ is the least number of interior nodes that satisfy $\calT$.

To adapt CP-ABE to a privilege-based access structure, the $\Decrypt$ function is run as many times as the number of ciphertexts a data user wishes to decrypt. 
The number of operations involved are $k(2|\bbA_j|+1)$ and $(2[|S_1|+\dots+|S_k|]+2k)$ for operations $e$ and $f_{\bbG_1}$ respectively, where $|S_1|,|S_2|,\cdots,|S_k|$ are the least number of interior nodes that satisfy the access trees $\calT_1,\calT_2,\cdots,\calT_k$ respectively.

FH-CP-ABE \cite{wang2016efficient} aims to satisfy a certain transport node (level) of the single access tree $\calT$ allowing the data user to decrypt certain encrypted files up to that level. To accomplish this, the $\Decrypt$ function propagates through the access tree nodes recursively to make sure that the user's attributes satisfy the nodes. The number of operations involved in this process are $(2|\bbA_j|+1)$ and $(2|S|+v|\bbA_T|+2k)$ for operations $e$ and $f_{\bbG_1}$ respectively.

Since FH-CP-ABE uses a single access tree $\calT$, the number of transport nodes $|\bbA_T|$ can be large, resulting in higher decryption complexity.
The least number of interior nodes $|S|$ that satisfy $\calT$ can also be large if the construction of the access tree $\calT$ is not optimized. Constructing a single access tree that accommodates a large number of attributes is resource-intensive and could become complicated as the access rules become more sophisticated.

The number of operations involved in the decryption process for P-MOD is similar to a single CP-ABE decryption run. 
P-MOD needs to run the $\Decrypt(CT_i,SK_j)$ function only one time, even if the data user needs to obtain more than one data file.
The $\Decrypt$ function requires $(2|\bbA_j|)$ and $(2|S|+2)$ for operations $e$ and $f_{\bbG_1}$ respectively. 
Once the data user successfully decrypts the ciphertext (obtains the symmetric key at his/her level), the user can derive the remaining lower level keys as described in equation~(\ref{keys}). 
The complexity of the operations involved in deriving the symmetric keys and decrypting ciphertexts at the lower levels are negligible in comparison with the group and bilinear operations involved in running the $\Decrypt$ function, and therefore could be ignored. 

\def\arraystretch{1.2} 
\begin{table*}[t]
\centering
\caption{Computational complexity summary} 
\label{tab:Computational}
\begin{tabular}{c c|c|c|c|}
\cline{3-5} & & \multicolumn{3}{|c|}{\textbf{Number of operations}}\\	
\cline{1-5} \multicolumn{1}{|c|}{\textbf{Complexity}} & \textbf{Operation} & \textbf{CP-ABE} & \textbf{FH-CP-ABE} & \textbf{P-MOD}\\	
\hline \multicolumn{1}{|c|}{\multirow{2}{*}{\textbf{Encryption}}} & \multirow{1}{*}{$f_{G_0}$} & $2(|X_1|+\dots+|X_k|)+k$ & $2|X|+k$ & $2(|Y_1|+\dots+|Y_k|)+k$\\ 
\cline{2-5} \multicolumn{1}{|c|}{} & \multicolumn{1}{|c|}{\multirow{1}{*}{$f_{G_1}$}} & $2k$ & $2v|\bbA_T|+2k$ & $2$\\
\hline \multicolumn{1}{|c|}{\multirow{2}{*}{\textbf{Decryption}}} & \multirow{1}{*}{$e$} & $k(2|\bbA_j|+1)$ & $2|\bbA_j|+1$ & $2|\bbA_j|$\\ 
\cline{2-5} \multicolumn{1}{|c|}{} & \multicolumn{1}{|c|}{\multirow{1}{*}{$f_{G_1}$}} & $2(|S_1|+\dots+|S_k|)+2k$ & $2|S|+v|\bbA_T|+2k$ & $2|S|+2$\\
\cline{1-5}
\end{tabular}
\end{table*}

\def\arraystretch{1.2} 
\begin{table*}[t]
\centering
\caption{Storage efficiency summary} 
\label{tab:Storage}
\begin{tabular}{c c|c|c|c|}
\cline{3-5} & & \multicolumn{3}{|c|}{\textbf{Number of elements}}\\	
\cline{1-5} \multicolumn{1}{|c|}{\textbf{Component}} & \textbf{Bit-length} & \textbf{CP-ABE} & \textbf{FH-CP-AB}E & \textbf{P-MOD}\\	
\hline \multicolumn{1}{|c|}{\multirow{2}{*}{\textbf{Public Key}}} & \multirow{1}{*}{$L_{\bbG_0}$} & $3$ & $3$ & $3$\\ 
\cline{2-5} \multicolumn{1}{|c|}{} & \multicolumn{1}{|c|}{\multirow{1}{*}{$L_{\bbG_1}$}} & $1$ & $1$ & $1$\\
\hline \multicolumn{1}{|c|}{\multirow{2}{*}{\textbf{Master Key}}} & \multirow{1}{*}{$L_{\bbZ_p}$} & $1$ & $1$ & $1$\\ 
\cline{2-5} \multicolumn{1}{|c|}{} & \multicolumn{1}{|c|}{\multirow{1}{*}{$L_{\bbG_0}$}} & $1$ & $1$ & $1$\\
\hline \multicolumn{1}{|c|}{\multirow{1}{*}{\textbf{Secret Key}}} & \multirow{1}{*}{$L_{\bbG_0}$} & $2|\bbA_j|+1$ & $2|\bbA_j|+1$ & $2|\bbA_j|+1$\\ 
\hline \multicolumn{1}{|c|}{\multirow{2}{*}{\textbf{Ciphertext}}} & \multirow{1}{*}{$L_{\bbG_0}$} & $2(|X_1|+\dots+|X_k|)+k$ & $2|X|+k$ & $2(|Y_1|+\dots+|Y_k|)+k$\\ 
\cline{2-5} \multicolumn{1}{|c|}{} & \multicolumn{1}{|c|}{\multirow{1}{*}{$L_{\bbG_1}$}} & $k$ & $v|\bbA_T|+k$ & $k$\\
\cline{1-5}
\end{tabular}
\end{table*}

When comparing the decryption complexities of P-MOD and FH-CP-ABE, P-MOD's decryption complexity depends on $|\bbA_j|$ and $|S|$ while FH-CP-ABE's decryption complexity depends on $|S|$, $v$, $|\bbA_T|$ and $k$. 
The size of the sets in FH-CP-ABE will always be greater than the size of the sets in P-MOD due to the different constructions of access trees in each scheme. In conclusion, P-MOD  outperforms FH-CP-ABE.

\subsection{Storage Efficiency}
The following analysis presents the space needed for keys and ciphertext storage in each scheme. 
The bit-length of the different keys involved and the ciphertexts generated in each scheme are presented. 
The bit-length of a single element in $\bbG_0$, $\bbG_1$ and $\bbZ_p$ are denoted as $L_{\bbG_0}$, $L_{\bbG_1}$ and $L_{\bbZ_p}$ respectively.

\subsubsection{Key Size}
Key size represents the bit-length of a generated key. Three different types of keys are generated for each scheme: a public key, a master key, and user private keys. 

The sizes of keys for all schemes are formulated the same way. 
The public key and master key are formulated based on the output of the $\Setup$ function. 
The public key consists of 3 elements from $\bbG_0$ and a single element from $\bbG_1$. 
The master key consists of a single element from each of $\bbG_0$ and $\bbZ_p$. 

The private key size of a single user is formulated based on the output of the $\KeyGen$ function. 
It consists of $(2|\bbA_j|+1)$ elements from $\bbG_0$, where $|\bbA_j|$ is the number of attributes possessed by the $j^{th}$ data user.

\subsubsection{Ciphertext Size}
Ciphertext size represents the bit-length of the ciphertext. Ciphertext size is based on the output of the $\Encrypt$ function for each scheme.

For CP-ABE, the total size of all generated ciphertexts from all levels consists of $(2(|X_1|+\dots+|X_k|)+k)$ elements from $\bbG_0$ and $k$ elements from $\bbG_1$. 
Using this scheme, the lower levels must accommodate attributes of the higher levels. 
Ciphertext size can potentially end up large in size due to attribute replication at each level.

The single ciphertext generated by FH-CP-ABE~\cite{wang2016efficient} consists of $(2|X|+k)$ elements from $\bbG_0$ and $(v|\bbA_T|+k)$ elements from $\bbG_1$. In this scheme, the ciphertext size depends on $|X|$, $|\bbA_T|$, $v$ and $k$. As the size of these sets grow, the ciphertext size can grow exponentially based on how the tree $\calT$ is constructed.

P-MOD generates ciphertexts in a similar approach to those generated by CP-ABE. 
The total size of all generated ciphertexts consists of $(2[|Y_1|+\dots+|Y_k|]+k)$ elements from $\bbG_0$ and $k$ elements from $\bbG_1$. 
However, the size of the ciphertext generated by P-MOD is shown to be smaller in size than CP-ABE in all instances.
This is based on the composition of P-MOD's access structure which does not duplicate attributes, therefore generating smaller ciphertexts.
In conclusion, P-MOD minimizes the number of attributes in each leveled access tree and therefore minimizes the size of the ciphertexts.

\section{Simulations}
In this section, the results of various simulations are presented to support the performance analysis discussed in Section \ref{sec:performance}. 
P-MOD is implemented and simulated in Java using the CP-ABE toolkit \cite{cpabekit} and the Java Pairing-Based Cryptography library (JPBC) \cite{de2011jpbc}. 
For comparison, simulations are also conducted for CP-ABE~\cite{bethencourt2007ciphertext} and FH-CP-ABE~\cite{wang2016efficient} under the same conditions as P-MOD. 
All simulations are conducted on an Intel(R) Core(TM) i5-4200M at 2.50 GHz and 4.00 GB RAM machine running the Windows 10 operating system. 

In the simulations, the number of levels $k$ is equivalent to the number of file partitions being shared. 
The total number of different user attributes applied to users across all levels within $H$ is represented as $N$.
Both variables are compared for key generation time-cost, encryption time-cost, and decryption time-cost.
The experiments include applying the values for $k = \{ 3,6,9 \}$, to each value of  $N = \{ 10,100 \}$, resulting in 6 experimental cases. 
Within each case, it is possible to distribute the user attributes among the levels in numerous ways. 
For example, when testing the schemes with a hierarchy $H$ for $k=3$ and $N=10$, there are 36 different ways to distribute the 10 user attributes among the 3 levels, excluding the cases of having zero user attributes at any level. 
A possible way of distributing them could be $\bbA_{\calL_1}=3$, $\bbA_{\calL_2}=3$ and $\bbA_{\calL_3}=4$, where $\bbA_{\calL_1}$ represents the set of user attributes possessed by users in the highest level within $H$ and $\bbA_{\calL_3}$  the set of user attributes possessed by users in the lowest level. 
As the values of $k$ and $N$ increase, the possible ways of user attribute distribution among all levels increases. 
Without loss of generality, we assume that the number of attributes for each level follows the normal distribution in our simulations.

The data set used in the simulations is sampled from the Census Income data set \cite{Lichman:2013}. 
The sample consists of 30,163 records, and each record is composed of 9 different record attributes. 
The records are stored in a Microsoft Excel file $\calF$, with a total size of 2.42MB. 
For simulation purposes, it is assumed that data sensitivity is defined over record attributes (vertical columns), corresponding to the second approach described in Section \ref{subsec:file}. 

$\calF$ is partitioned into $k$ sections, where $\forall F_i \in \calF$ and $1 \leq i \leq k$, each $F_i$ represents one or more full columns of $\calF$. Table \ref{tab:partitions} represents the 9 record attributes distribution of $\calF$ into each partition $F_i$, based on the given value for $k$ in each scenario.
\begin{table}[h]
\centering
\caption{Attribute distribution in each partition $F_i$} 
\label{tab:partitions}
 \begin{tabular}{|c|c|c|c|c|c|c|c|c|c|}  \hline
 $\textbf{k}$ & $F_1$ & $F_2$ & $F_3$ & $F_4$ & $F_5$ & $F_6$ & $F_7$ & $F_8$ & $F_9$\\ \hline
 \textbf{3} & 2 & 3 & 4 & - & - & - & - & - & -\\  \hline
 \textbf{6} & 1 & 1 & 1 & 2 & 2 & 2 & - & - & -\\ \hline  
 \textbf{9} & 1 & 1 & 1 & 1 & 1 & 1 & 1 & 1 & 1\\  \hline
\end{tabular}
\end{table} 

All simulations are performed in consideration of users at the most sensitive (highest) level within the hierarchy. 
This approach considers the most complicated applications, where a user needs to gain access to the entire file. Fig. \ref{Fig:time_graph} summarizes the time expended by the three schemes to generate a private key for the user, encrypt all partitions of $\calF$ and decrypt all partitions respectively, in all 6 scenarios. 
By analyzing the figures, some observations can be derived. The results are summarized in the following sections.

\subsection{Key Generation Time-Cost}

To measure the time to generate a private key for a user, the same attribute and level conditions are applied to all three schemes. P-MOD outperforms CP-ABE and FH-CP-ABE in all experimental evaluations. As illustrated in Fig.~\ref{Fig:time_graph}(a), the time taken to generate a private key for a user at the highest level in CP-ABE and FH-CP-ABE, is independent of the value of $k$ and remains nearly constant when $N$ is kept constant, for both values of $N=\{10,100\}$ tested. 
%\begin{figure}[t]
%\centering
%\includegraphics[width=.6\columnwidth]{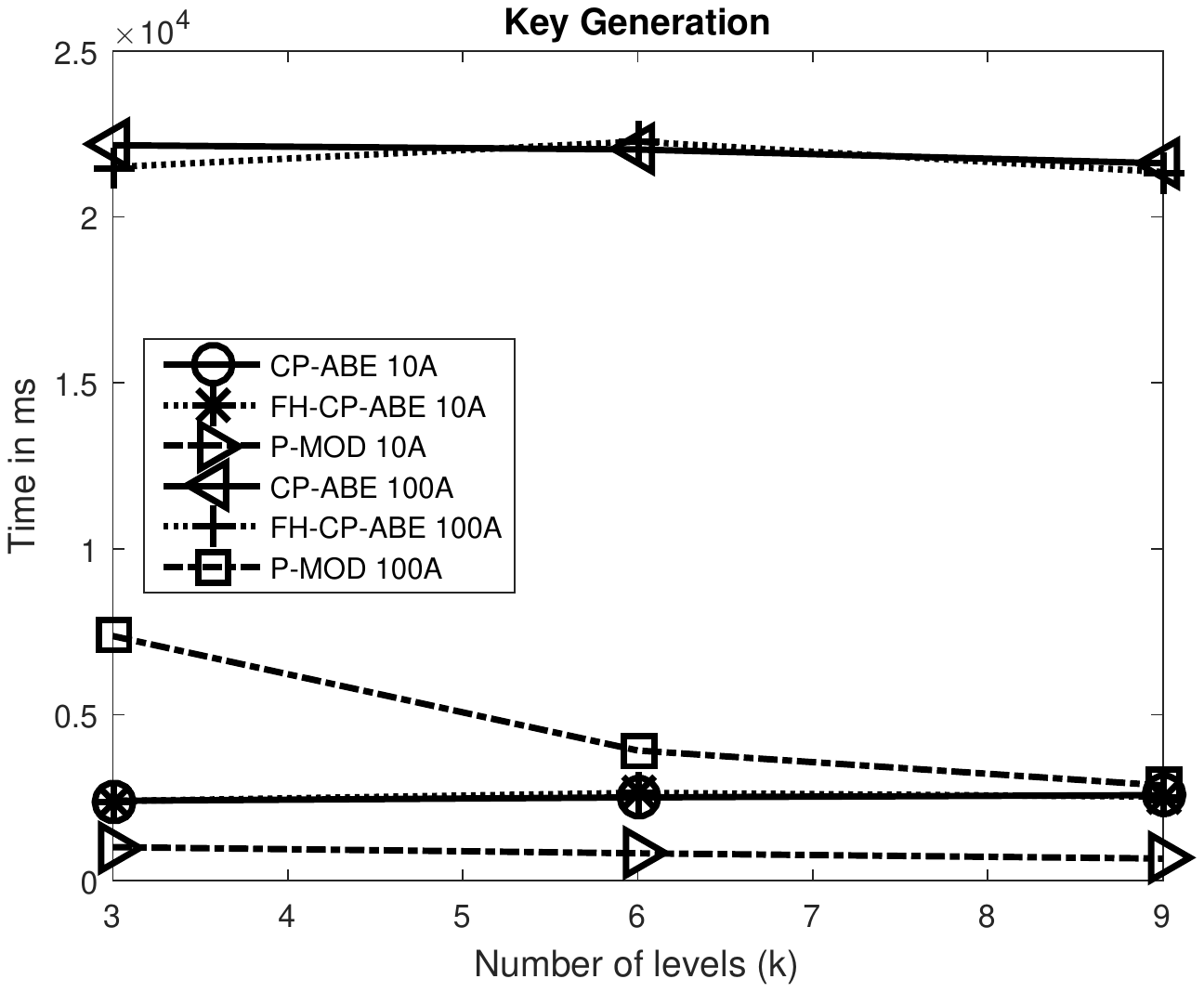}
%\caption{Key generation time-cost}
%\label{Fig:kg_graph}
%\end{figure}

\begin{figure*}[t]
\centering
\begin{tabular}{ccc}
\includegraphics[width=.32\textwidth]{Figures/kg_graph}& \includegraphics[width=.32\textwidth]{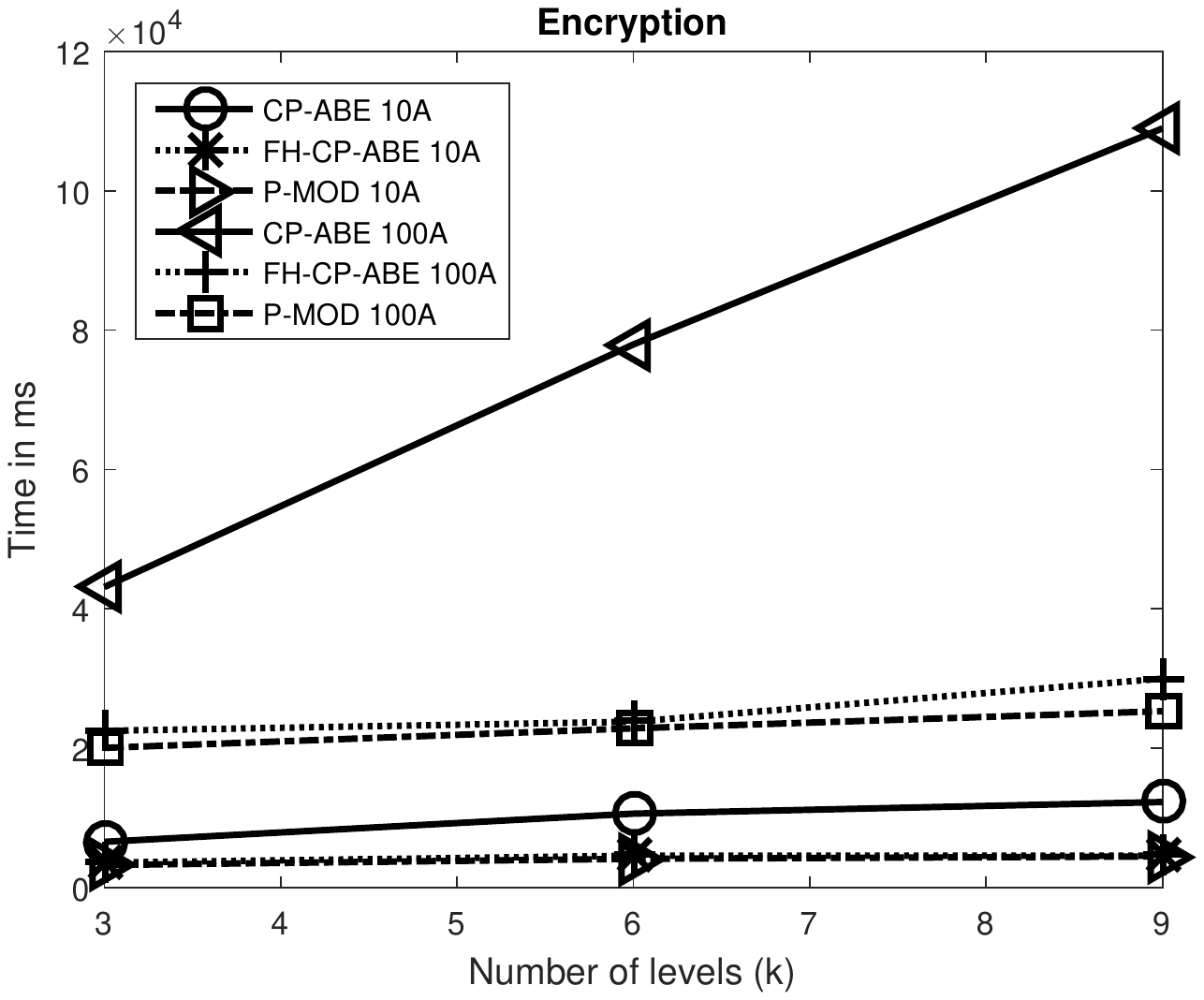}
& 
\includegraphics[width=.32\textwidth]{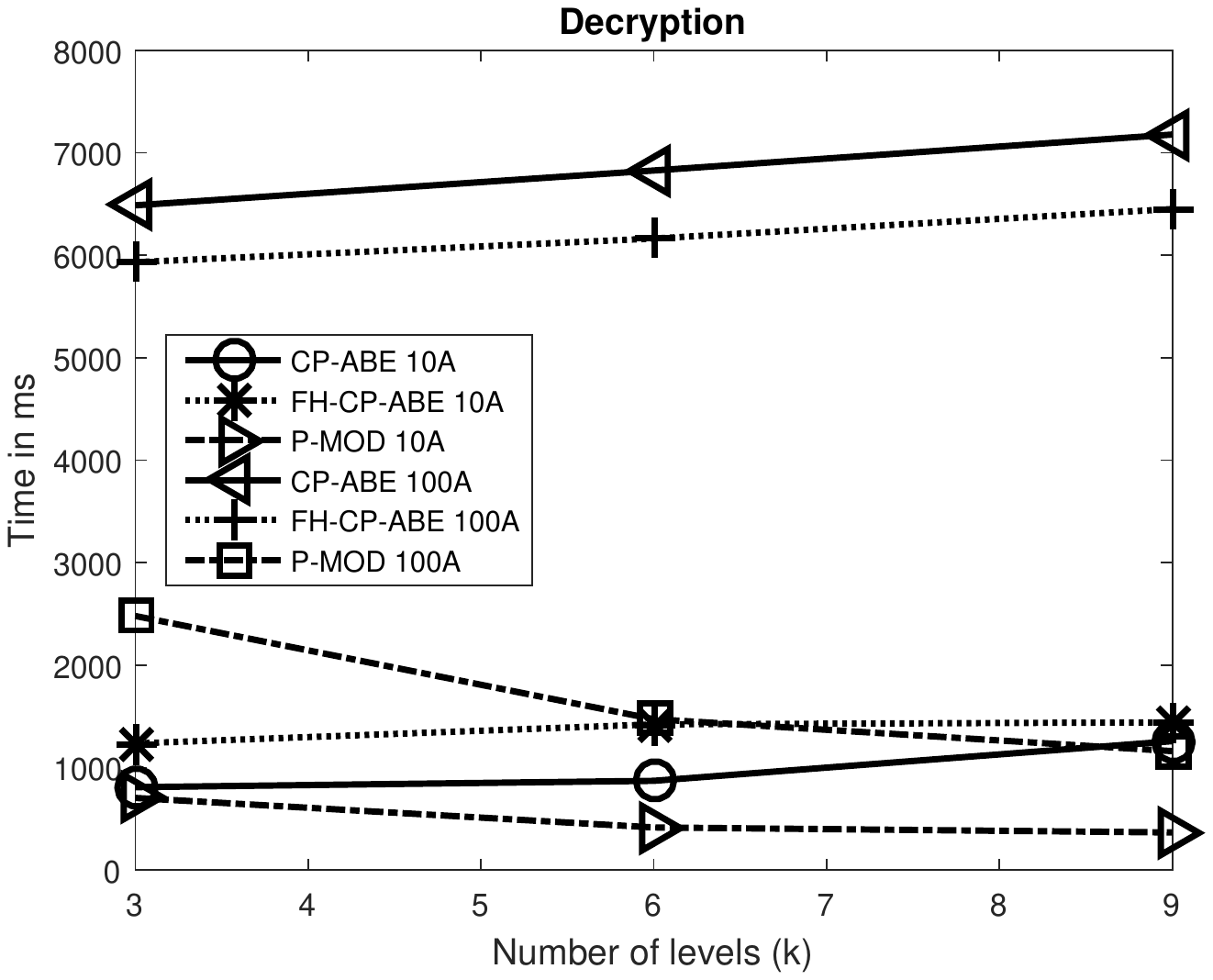}\\
(a) & (b) & (c)
\end{tabular}
\caption{Performance comparison: (a) Key generation time, (b) Encryption time, and (c) Decryption time}
\label{Fig:time_graph}
\end{figure*}

In comparison, P-MOD reacts differently. When the total number of user attributes is normally distributed among the levels, the time taken to generate a private key by P-MOD is approximately the reciprocal of the value of $k$ multiplied by the equivalent time taken by CP-ABE or FH-CP-ABE to perform the same function. For example, the time taken by CP-ABE and FH-CP-ABE in the case where $k=6$ and $N=100$ is approximately 2.2 seconds. The time taken by P-MOD in the same conditions is approximately 0.35 seconds, which is nearly $\frac{1}{6}$ times the time-cost of the other schemes. Therefore, the time taken to generate a private key by P-MOD is inversely proportional to the value of $k$ in the hierarchy. This is true for a constant value N with normally distributed attributes.    

Another observation that can be made from Fig. \ref{Fig:time_graph}(a) is the effect of the values $N$ on the time expended. The time-cost is directly related to the value $N$ and the degree of this proportional increase is different for each scheme as the value $N$ increases. 

We define variable $\delta$ as the difference in time-cost of two experimental evaluations of the same scheme at $N=10$ and $N=100$, while keeping $k$ fixed. In Fig. \ref{Fig:time_graph}(a), consider the simulated values for each scheme at $k=9$. Both CP-ABE and FH-CP-ABE result in large time-cost changes, approximately $\delta_{\mathrm{CP-ABE}} = 19039$ms and $\delta_{\mathrm{FH-CP-ABE}} = 18821$ms, while P-MOD results in a smaller value, $\delta_{\mathrm{P-MOD}} = 2204$ms. Based on these values, $\delta_{\mathrm{P-MOD}}$ is approximately 11.7\% of  $\delta_{\mathrm{CP-ABE}}$ and $\delta_{\mathrm{FH-CP-ABE}}$, resulting in an approximately 88.3\% improvement.  The significance of this observation can be seen in applications where $N$ is a large value. Efficiency can be gained in  time expenditure by utilizing P-MOD, versus CP-ABE and FH-CP-ABE.

\subsection{Encryption Time-Cost}
The encryption time-cost is the time it takes by each scheme to perform the encryption function over all partitions of $\calF$. 
Fig. \ref{Fig:time_graph}(b) represents the time expenditure of each scheme under all six experimental scenarios. 
P-MOD surpasses both CP-ABE and FH-CP-ABE in every experimental case. 
For example, compare the time duration of the three schemes at $k=9$ and $N=100$. 
The time duration for CP-ABE is approximately 4.3 times of P-MOD to perform encryption. 
Similarly, FH-CP-ABE is approximately 1.2 times of P-MOD to perform encryption. 

All schemes follow a direct proportional pattern as the values $k$ and $N$ increase. 
These results prove the correctness of the encryption computational complexity analysis presented in Section \ref{enc_comp_complexity}. 
The encryption function in all schemes involves a number of $f_{G_0}$ and $f_{G_1}$ operations that are dependent on both the values $k$ and $N$. 
However, based on P-MOD's enhanced hierarchical structure, it is able to outperform both CP-ABE and FH-CP-ABE. 
In addition to this, the effect of changing the value $N$ is also illustrated clearly in Fig. \ref{Fig:time_graph}(b). 
For example, when $k=9$, $\delta_{\mathrm{P-MOD}}$ is approximately 21.6\% of $\delta_{\mathrm{CP-ABE}}$ and approximately 82\% of $\delta_{\mathrm{FH-CP-ABE}}$. 
%\begin{figure}[t]
%\centering
%\includegraphics[width=.95\columnwidth]{Figures/enc_graph}
%\caption{Encryption time-cost}
%\label{Fig:enc_graph}
%\end{figure}

\subsection{Decryption Time-Cost}

As previously discussed, the experiments are performed in the perspective of a user that appears at the highest level of the hierarchy. 
Taking this into account, the decryption time-cost is defined as the time for the user to successfully decrypt all ciphertexts $EF_i$ corresponding to all partitions  $F_i$, if the user possesses the correct set of user attributes. 
Fig. \ref{Fig:time_graph}(c) illustrates the time to perform the decryption function by each scheme.  
The decryption function of both CP-ABE and FH-CP-ABE both involve $e$ and $f_{G_1}$ operations that are dependent on the values $k$ and $N$. 
As these values increase, the decryption time-cost increases linearly for both schemes, proving the correctness of the decryption complexity analysis in Section \ref{dec_comp_complexity}. 

When measuring the decryption time-cost, P-MOD outperforms both CP-ABE and FH-CP-ABE. This is due to the time-cost being inversely proportional to the value of $k$. The time-cost decreases as the value of $k$ increases while $N$ is kept constant. 

P-MOD's decryption time-cost does not severely increase while the value $N$ changes from 10 to 100.
In contrast to this, CP-ABE and FH-CP-ABE are greatly affected, as seen in Fig. \ref{Fig:time_graph}(c). 
For example, when $k=9$, $\delta_{\mathrm{P-MOD}}$ is approximately 13.4\% of $\delta_{\mathrm{CP-ABE}}$ and approximately 15.8\% of $\delta_{\mathrm{FH-CP-ABE}}$. 
The decryption time-cost of P-MOD is expected to drop when $k$ increases while keeping the file size constant.
%\begin{figure}[t]
%\centering
%\includegraphics[width=.95\columnwidth]{Figures/dec_graph}
%\caption{Decryption time-cost}
%\label{Fig:dec_graph}
%\end{figure}

In summary, for a hierarchical organization with many levels, the simulation results show that P-MOD is significantly more efficient at generating keys, encryption, and decryption than that of both CP-ABE and FH-CP-ABE schemes. 

\section{Conclusion \label{sec:conclusion}}
The numerous benefits provided by the cloud have driven many large multilevel organizations to store and share their data on it. 
This paper begins by pointing out major security concerns data owners have when sharing their data on the cloud.  
Next, the most widely implemented and researched data sharing schemes are briefly discussed revealing points of weakness in each. 
To address the concerns, this paper proposes a Privilege-based Multilevel Organizational Data-sharing scheme~(P-MOD) that allows data to be shared efficiently and securely on the cloud. 
P-MOD partitions a data file into multiple segments based on user privileges and data sensitivity. 
Each segment of the data file is then shared depending on data user privileges.
We formally prove that P-MOD is secure against adaptively chosen plaintext attack assuming the DBDH assumption holds. 
Our comprehensive performance comparison with the two most representative schemes shows that P-MOD can significantly reduce the computational complexity while minimizing the storage space.

\bibliographystyle{IEEETran}
\bibliography{P-MOD}

\end{document}